\documentclass[letterpaper,10pt, conference]{ieeeconf}
\IEEEoverridecommandlockouts
\overrideIEEEmargins

\usepackage{amsmath,amssymb,amsthm,epsfig,color,subfigure,empheq,graphicx}
\usepackage[algo2e]{algorithm2e}
\usepackage{enumerate,url,wasysym,balance,enumerate}
\usepackage{algorithm,algpseudocode}	
\usepackage{arydshln}
\usepackage{tikz,balance}
\usetikzlibrary{matrix,decorations.pathreplacing}

\newcommand \bd{\mathbf{d}}

\newcommand \bef{\mathbf{f}}
\newcommand \bg{\mathbf{g}}
\newcommand \bi{\mathbf{i}}

\newcommand \bu{\mathbf{u}}
\newcommand \bv{\mathbf{v}}

\newcommand \bx{\mathbf{u}}
\newcommand \by{\mathbf{y}}

\newcommand \bq{\mathbf{q}}
\newcommand \bA{\mathbf{A}}
\newcommand \bB{\mathbf{B}}

\newcommand \bG{\mathbf{G}}
\newcommand \1{\mathbf{1}}

\newcommand \blambda{{\boldsymbol{\lambda}}}
\newcommand \bpi{{\boldsymbol{\pi}}}

\newcommand \mcG{\mathcal{G}}

\newcommand \mcF{\mathcal{F}}

\newcommand \mcN{\mathcal{N}}

\newcommand \mcE{\mathcal{E}}

\newcommand \mcY{\mathcal{Y}}

\newcommand \bzero{\boldsymbol{0}}

\newcommand \bfeta{\boldsymbol{\eta}}

\DeclareMathOperator{\diag}{dg}

\newtheorem{proposition}{Proposition}
\newtheorem{lemma}{Lemma}

\newtheorem{assumption}{Assumption}

\theoremstyle{remark}\newtheorem{remark}{Remark}

\allowdisplaybreaks

\usepackage{ifthen}
\newboolean{showcomments}
\setboolean{showcomments}{true}
\newcommand{\andrey}[1]{\ifthenelse{\boolean{showcomments}}
	{ \textcolor{blue}{(Andrey says:  #1)}}{}}
\newcommand{\guido}[1]{\ifthenelse{\boolean{showcomments}}
	{ \textcolor{blue}{(Guido says:  #1)}}{}}

\title{\LARGE \bf Ripple-Type Control for Enhancing Resilience\\of Networked Physical Systems}

\author{Manish K. Singh, Guido Cavraro, Andrey Bernstein, and Vassilis Kekatos%
\thanks{Manish K. Singh and Vassilis Kekatos are  with the Bradley Department of ECE, Virginia Tech, Blacksburg, VA 24061 USA (e-mail:
\{manishks,~kekatos\}@vt.edu).
G. Cavraro and A. Bernstein are with the Power Systems Engineering Center, National Renewable Energy Laboratory, Golden, CO 80401 USA (e-mail: name.surname@nrel.gov). Work partially supported by the U.S. National Science Foundation under Grant 1711587. This work was authored by the National Renewable Energy Laboratory, operated by Alliance for Sustainable Energy, LLC, for the U.S. Department of Energy (DOE) under Contract No. DE-AC36-08GO28308. Funding provided by NREL Laboratory Directed Research and Development Program. The views expressed in the article do not necessarily represent the views of the DOE or the U.S. Government. The U.S. Government retains and the publisher, by accepting the article for publication, acknowledges that the U.S. Government retains a nonexclusive, paid-up, irrevocable, worldwide license to publish or reproduce the published form of this work, or allow others to do so, for U.S. Government purposes.
M. K. Singh and G. Cavraro contributed equally to this work.}%
}

\begin{document}
\maketitle
\thispagestyle{empty}
\pagestyle{empty}

\begin{abstract}
Distributed control agents have been advocated as an effective means for improving the resiliency of our physical infrastructures under unexpected events. Purely local control has been shown to be insufficient, centralized optimal resource allocation approaches can be slow. In this context, we put forth a hybrid low-communication saturation-driven protocol for the coordination of control agents that are distributed over a physical system and are allowed to communicate with peers over a ``hotline'' communication network. According to this protocol, agents act on local readings unless their control resources have been depleted, in which case they send a beacon for assistance to peer agents. Our ripple-type scheme triggers communication locally only for the agents with saturated resources and it is proved to converge. Moreover, under a monotonicity assumption on the underlying physical law coupling control outputs to inputs, the devised control is proved to converge to a configuration satisfying safe operational constraints. The assumption is shown to hold for voltage control in electric power systems and pressure control in water distribution networks. Numerical tests corroborate the efficacy of the novel scheme. 
\end{abstract}

\begin{keywords}
Energy networks, event-triggered control, distributed control, resiliency, voltage control.
\end{keywords}

%%%%%%%%%%%%%%%%%%%%%%%%%%%%%%%%%%%%%%%%%%%%%%%%%%%%%%%%%

\section{Introduction}\label{sec:intro}
Utility systems, such as power, water, and gas networks, are significant examples of networked systems and are undergoing rapid changes. Power systems experience significant penetration of distributed energy resources and flexible load thus increasing the system volatility. Natural gas-fired generators serve as a fast-acting balancing mechanism for power systems, inadvertently increasing the volatility in gas networks. Rising threats of clean water scarcity have motivated tremendous efforts toward judicious planning for bulk water systems and enhanced monitoring and control for water distribution networks. The increasing occurrence of natural disasters and other (cyber and physical) disruptions undermines the operation of the aforesaid systems.

Classic operation of utility systems is performed via management systems that aim at satisfying consumer demands in a cost-effective manner while meeting the related operational and physical constraints. Such tasks constitute the family of optimal dispatch problems (ODP). ODPs are typically solved at regular intervals based on anticipated demands and network conditions. Stochastic optimization formulations are often leveraged to account for the uncertainty and to ensure reliable operation within the ODP interval.

Although ODP solutions can ensure reliable system operation during normal conditions, they cannot account for the occurrence  of low-probability, high-impact disruptions that might undermine system operation during the time interval between two ODP actions.
To improve system resilience to these events, this paper devises a control mechanisms to ensure the satisfaction of system operational requirements. It draws features from local, distributed, and event-triggered control and is referred to as \emph{ripple-type control}.  In local control rules, agents make decisions based on locally available readings. For example, in~\cite{HaoZhu}, \cite{Xinyang}, power generators control their reactive power output given their local power injection and voltage; however, local schemes have limited efficacy~\cite{NeedComm}. Distributed control strategies, in which agents compute their control action after sharing information with neighbors in a communication network, have a wide spectrum of applications, e.g.,  energy systems~\cite{CavCar2018}, \cite{bernstein2019}, \cite{Colombino} or camera networks~\cite{Bof}. To avoid wasting resources and to communicate only when it is really needed, event-triggered control techniques were advocated in~\cite{Heemels}, \cite{NaLi2019ACM}. Essentially, every agent evaluates locally a \emph{triggering function}, e.g., in a consensus setup, the mismatch between the current state and the state that was last sent to neighbors~\cite{Cortes2016consensus}. When the triggering function takes some specified values, agents communicate and update their control rule.

In the proposed ripple-type control, agents first try to satisfy their local constraints via purely local control. Only when such control efforts reach their maximum limit, an assistance is sought from neighboring agents on a communication graph. The process is continued until the control objectives of every agent are met. The proposed algorithm is \emph{model free} in the sense that it does not require knowledge of the system model parameters. This is an essential property of real-time emergency control due to lack of accurate model information during contingency scenarios~\cite{chen2020}.
%A feasibility problem is first defined based on the critical hard constraints of the original ODP.  A distributed, event-triggered algorithm is then designed to solve the feasibility problem in real time. The algorithm employs a hybrid approach, wherein agents in the network (e.g, loads and generators in a power system) try to satisfy their local constraints first, only using local control effort. Only when the local control effort is at its maximum limit, an assistance is sought from the neighboring agents on a communication graph. 
%To show its applicability in real systems, 
The proposed control scheme is tested on electric and water networks.

%\footnote{\emph{Notation}: Lower- (upper-) case boldface letters denote column vectors (matrices). Sets are represented by calligraphic symbols. Symbol $^\top$ stands for transposition. Inequalities, $\max$, and $\min$ operators are understood element-wise. All-zero and all-one vectors and matrices are represented by $\bzero$ and $\1$; the respective dimensions are deducible from context. \cmb{Symbol $\succ$ indicates all positive real eigenvalues.} The $\diag(\cdot)$ operator on vectors places the vector on the principal diagonal of a matrix. Symbol $\|\cdot\|$ represents the $L_2$ norm.} 

%%%%%%%%%%%%%%%%%% GRID MODELING  %%%%%%%%%%%%%%%%%%%%%
\section{System Modeling}
\label{sec:net_modeling}
Consider a networked system modeled by an undirected graph $\mcG=(\mcN,\mcE)$. The set $\mcN$ is a collection of $N$ nodes hosting controllable agents, and vector $\bx\in\mathbb{R}^N$ represents their \emph{control inputs}\footnote{\emph{Notation}: Lower- (upper-) case boldface letters denote column vectors (matrices). Sets are represented by calligraphic symbols. Symbol $^\top$ stands for transposition. Inequalities, $\max$, and $\min$ operators are understood element-wise. All-zero and all-one vectors and matrices are represented by $\bzero$ and $\1$; the respective dimensions are deducible from context. Symbol $\succ$ indicates all positive real eigenvalues. The $\diag(\cdot)$ operator on vectors places the vector on the principal diagonal of a matrix. Symbol $\|\cdot\|$ represents the $L_2$ norm.}.
A subset of agents comprising set $\mcY\subset \mcN$ of cardinality $M$ is assumed to be collecting noiseless scalar observations of the local \emph{states or outputs} stacked in $\by\in\mathbb{R}^M$. The entries of $\by$ are to be regulated within a desired range. Given control $\bx$, the system has a locally unique output $\by$ determined by a mapping $\textrm{F}: \mathbb{R}^N\rightarrow \mathbb{R}^M$.
Heed that the mapping $\textrm{F}$ depends implicitly on other uncontrollable system inputs that are not part of $\bu$. Moreover, this mapping might not have an explicit form. We consider cyber-physical systems in which the related mapping $\textrm{F}$ adheres to the following property.

\begin{assumption}\label{as:1}
Mapping $\by=\textrm{F}(\bu)$ satisfies $\nabla_{\bu} \textrm{F}(\bu)\geq \bzero~\forall~\bu$.
\end{assumption}

Albeit seemingly restrictive, the postulated monotonicity assumption holds for several physical systems abiding by a dissipative flow law, such as natural gas and water networks~\cite{Vuffray2015monotonicity}. We will now establish the validity of Assumption~\ref{as:1} for power systems and water networks. Then, Section~\ref{sec:prob_form} formally introduces the problem setup, and Section~\ref{sec:solution} presents the proposed control scheme and the associated claims.

\emph{Power Systems.} An electric power network can be modeled by a graph $\mcG=(\mcN,\mcE)$, with $\mcE$ capturing transmission lines. The node or bus set $\mcN$ can be partitioned into generator (PV) buses $\mcN_G$ and load (PQ) buses $\mcN_L$. Generators can control their active power injection and voltage magnitude. For loads, complex power injections can be adjustable or fixed, but they are largely independent of voltages. Let $(v_n,q_n)$ denote the voltage magnitude and reactive power injection at bus $n\in\mcN$. Vectors $(\bv,\bq)$ collect $\{(v_n,q_n)\}_{n\in\mcN}$.

Consider the task of maintaining the load voltages above given limits by controlling the reactive power injections at the load buses and the voltages at the generator buses. Loads can partially control their reactive injections due to inverters, capacitor banks, and  flexible AC transmission systems (FACTS). To study the task, we build on the widely adopted approximate model~\cite{Porco}:
% Transmission lines are often approximated as lossless, since their resistances are much smaller than their reactances. Thanks to this lossless approximation and the assumption of small voltage angle differences across neighboring buses, voltage angles (resp. magnitudes) are rather insensitive to variations in reactive (resp. active) power injections. To develop voltage control algorithms, the focus here is on the $\bq/\bv$ system . Vectors $\bv$ and $\bq$ adhere to the approximate model
\begin{equation}\label{eq:qdef}
    \bq=\diag(\bv)\bB\bv.
\end{equation}
where $\bB \in \mathbb R^{N\times N}$ is a weighted Laplacian matrix of $\mcG$: Its $(m,n)$-th entry $B_{mn}<0$ equals the negative susceptance of line $(m,n)\in\mcE$; $B_{mn}=0$ if $(m,n)\notin\mcE$; and $B_{mm}=-\sum_{n\neq m}B_{mn}>0$ for its diagonal entries. Model~\eqref{eq:qdef} has been derived from the AC power flow equations after ignoring transmission line resistances and assuming small voltage angle differences across neighboring buses. Partitioning $\bv$ and $\bq$ into generator and load buses, rewrite~\eqref{eq:qdef} as:
\begin{equation}
    \begin{bmatrix}\bq_G\\\bq_L\end{bmatrix}=\begin{bmatrix}\diag(\bv_G)&\bzero\\\bzero&\diag(\bv_L)\end{bmatrix}\begin{bmatrix}\bB_{GG}&\bB_{LG}^\top\\\bB_{LG}&\bB_{LL}\end{bmatrix}\begin{bmatrix}\bv_G\\\bv_L\end{bmatrix}\label{eq:qdef2}
\end{equation}
where $\bB$ has been partitioned accordingly.
%Equation~\eqref{eq:vq} instantiates~\eqref{eq:model} for power systems.

%On the other hand, load voltages and the reactive power output of generators are uncontrolled quantities and so
%
%\begin{equation*}%\label{eq:xdef}
%\phi_n:=\left\{\begin{array}{ll}
%v_n&,~ n\in\mcN_L\\
%q_n &,~ n\in\mcN_S
%\end{array}.\right.
%\end{equation*}
%
%Based on the former reasoning, we set
%
%\begin{equation}
%\bx = [\bv_G^\top ~~ \bq_L^\top]^\top,\quad \bphi = [\bq_G^\top~~\bv_L^\top]^\top.
%\label{eq:x&phi}
%\end{equation}
%
%Given $\bx$, vector $\bphi$ can be found by solving~\eqref{eq:qdef2}. 

For the control task at hand, the control variable per bus $n$ depends on the type of the bus as:
\begin{equation*}%\label{eq:xdef}
u_n:=\left\{\begin{array}{ll}
v_n &,~ n\in\mcN_G\\
q_n&,~ n\in\mcN_L
\end{array}.\right.
\end{equation*}
We would like to adjust input $\bu=[\bq_L^\top~~\bv_G^\top]^\top$ to control output $\by=\bv_L$. To validate Assumption~\ref{as:1}, the ensuing result studies the mapping $\by=\textrm{F}(\bu)$ for the case of voltage control.

\begin{proposition}\label{prop:monotonicPower}
For control input $\bu=[\bq_L^\top~~\bv_G^\top]^\top$ and output $\by=\bv_L$, the mapping $\by=\textrm{F}(\bu)$ satisfies Assumption~\ref{as:1} if $\diag(\bg_L)+\bB_{LL}\succ0$ where $\bg_L:=\left[\diag(\bv_L)\right]^{-2}\bq_L$.
\end{proposition}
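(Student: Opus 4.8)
The plan is to treat the lower block-row of~\eqref{eq:qdef2}, namely $\bq_L = \diag(\bv_L)\left(\bB_{LG}\bv_G + \bB_{LL}\bv_L\right)$, as an implicit definition of the output $\by=\bv_L$ in terms of the input $\bu=[\bq_L^\top~~\bv_G^\top]^\top$, and to obtain $\nabla_\bu\textrm{F}$ by implicit differentiation. Assuming load voltages are strictly positive (a benign physical condition), I would set $\bG(\bv_L;\bu):=\diag(\bv_L)(\bB_{LG}\bv_G+\bB_{LL}\bv_L)-\bq_L$ and differentiate the identity $\bG=\bzero$. A short calculation gives the state Jacobian $\partial\bG/\partial\bv_L=\diag(\bphi)+\diag(\bv_L)\bB_{LL}$, where $\bphi:=\bB_{LG}\bv_G+\bB_{LL}\bv_L$, together with $\partial\bG/\partial\bq_L=-\bI$ and $\partial\bG/\partial\bv_G=\diag(\bv_L)\bB_{LG}$.

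The next step is to surface the quantity $\bg_L$ appearing in the statement. Since the implicit equation forces $\bphi=[\diag(\bv_L)]^{-1}\bq_L$, I can factor $\diag(\bphi)+\diag(\bv_L)\bB_{LL}=\diag(\bv_L)\left(\diag(\bg_L)+\bB_{LL}\right)$, with $\bg_L=[\diag(\bv_L)]^{-2}\bq_L$ exactly as defined. The implicit function theorem (applicable because the hypothesis $\diag(\bg_L)+\bB_{LL}\succ0$ renders this Jacobian nonsingular) then yields the two blocks of $\nabla_\bu\textrm{F}$ in the clean form
\[
\frac{\partial\bv_L}{\partial\bq_L}=\left(\diag(\bg_L)+\bB_{LL}\right)^{-1}[\diag(\bv_L)]^{-1},\qquad \frac{\partial\bv_L}{\partial\bv_G}=-\left(\diag(\bg_L)+\bB_{LL}\right)^{-1}\bB_{LG}.
\]

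It remains to certify that both blocks are entrywise nonnegative, which is the crux of the argument. The diagonal factor $[\diag(\bv_L)]^{-1}$ is positive, and the off-diagonal block $\bB_{LG}$ collects Laplacian entries between load and generator buses, which are nonpositive, so $-\bB_{LG}\geq\bzero$. Hence both blocks are nonnegative as soon as $(\diag(\bg_L)+\bB_{LL})^{-1}\geq\bzero$. Here I would invoke M-matrix theory: $\bB_{LL}$ inherits nonpositive off-diagonal entries from the Laplacian $\bB$, and adding the diagonal matrix $\diag(\bg_L)$ leaves those off-diagonal entries untouched, so $\diag(\bg_L)+\bB_{LL}$ is a (symmetric) Z-matrix. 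A Z-matrix with all positive real eigenvalues --- precisely what $\succ0$ encodes --- is a nonsingular M-matrix, and nonsingular M-matrices have entrywise nonnegative inverses. This delivers $(\diag(\bg_L)+\bB_{LL})^{-1}\geq\bzero$, whence $\nabla_\bu\textrm{F}\geq\bzero$ and Assumption~\ref{as:1} holds.

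I expect the main obstacle to be this final step: the hypothesis is phrased as an eigenvalue (positive-definiteness) condition, whereas monotonicity requires the stronger-looking entrywise nonnegativity of a matrix inverse. The bridge is the Z-matrix structure of $\diag(\bg_L)+\bB_{LL}$, without which positive definiteness alone would not guarantee a nonnegative inverse; recognizing this structure and appealing to the Z-matrix/M-matrix equivalence is the key insight. A secondary technical point is justifying strict positivity of $\bv_L$ and the applicability of the implicit function theorem, both of which follow from the nonsingularity guaranteed by the assumption.
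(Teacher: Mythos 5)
Your proof is correct and follows the same backbone as the paper's --- implicit differentiation of the load block of~\eqref{eq:qdef2} followed by an M-matrix argument --- but the final step is organized differently, and arguably more cleanly. The paper keeps the state Jacobian in the form $\bG=\diag(\bi_L)+\diag(\bv_L)\bB_{LL}$ with $\bi_L=\bB_{LG}\bv_G+\bB_{LL}\bv_L$, and must then transfer the hypothesis $\diag(\bg_L)+\bB_{LL}\succ0$ over to $\bG$ via a two-step argument (Sylvester's law of inertia for the congruence by $\diag(\bv_L)^{1/2}$, then a similarity transformation) before invoking the Z-matrix/M-matrix equivalence on $\bG$ itself. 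You instead factor the Jacobian as $\diag(\bv_L)\left(\diag(\bg_L)+\bB_{LL}\right)$ --- using $\bi_L=[\diag(\bv_L)]^{-1}\bq_L$ so that $\diag(\bi_L)=\diag(\bv_L)\diag(\bg_L)$ --- which lets you apply the M-matrix reasoning directly to the matrix named in the hypothesis and reduces the remaining work to the trivial observations $[\diag(\bv_L)]^{-1}>\bzero$ and $-\bB_{LG}\geq\bzero$. This buys you a shorter argument that dispenses with the inertia/similarity step entirely, and it makes explicit why the hypothesis is stated in terms of $\diag(\bg_L)+\bB_{LL}$ rather than $\bG$. Your identification of the Z-matrix structure as the bridge between the spectral hypothesis and entrywise nonnegativity of the inverse is exactly the point the paper relies on (and, incidentally, your conclusion $\bG^{-1}\geq\bzero$ corrects a sign typo in the paper, which writes $\bG^{-1}\leq\bzero$). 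The only caveat, which you flag yourself, is the standing physical assumption $\bv_L>\bzero$; the paper invokes it equally without proof.
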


\begin{proof}
Adopting the implicit differentiation approach, apply the differential operator on the second block of~\eqref{eq:qdef2}:
\begin{equation}\label{eq:qdiff}
\partial \bq_L=\left(\diag(\bi_L)+\diag(\bv_L)\bB_{LL}\right) \partial \bv_L +\diag(\bv_L)\bB_{LG}\;\partial \bv_G
\end{equation}
where $\bi_L:=\bB_{LG}\bv_G+\bB_{LL}\bv_L$. Because $(\bq_L,\bv_G)$ are independent, it holds $\nabla_{\bq_L} \bv_G=\bzero$ so that \eqref{eq:qdiff} yields:
\begin{equation}\label{eq:dql}
    % \nabla_{\bq_L} \bv_L=\frac{\partial\bv_L}{\partial \bq_L}=\left(\diag(\bi)+\diag(\bv_L)\bB_{LL}\right)^{-1}.
    \nabla_{\bq_L} \bv_L=\left(\diag(\bi_L)+\diag(\bv_L)\bB_{LL}\right)^{-1}.
\end{equation}
And because $\nabla_{\bv_G} \bq_L=\bzero$, equation \eqref{eq:qdiff} also provides:
\begin{equation}\label{eq:dvg}
    % \nabla_{\bv_G} \bv_L=\frac{\partial \bv_L}{\partial \bv_G}=-\left(\diag(\bi)+\diag(\bv_L)\bB_{LL}\right)^{-1}\diag(\bv_L)\bB_{LG}.
    \nabla_{\bv_G} \bv_L=-\left(\diag(\bi_L)+\diag(\bv_L)\bB_{LL}\right)^{-1}\diag(\bv_L)\bB_{LG}.
\end{equation}

% where $\bi=\bB_{LG}\bv_G+\bB_{LL}\bv_L$ represents the current injection at load buses, and where we used the fact that $\bv_G$ and  $\bq_L$ are independent variables.

For the Jacobian matrices $\nabla_{\bq_L} \bv_L$ and $\nabla_{\bv_G} \bv_L$ to have nonnegative entries, it suffices to show that the inverse of $\bG:=\diag(\bi_L)+\diag(\bv_L)\bB_{LL}$ has nonnegative entries. This is because $\bv_L> \bzero$ and $\bB_{LG}\leq \bzero$. 

To establish $\bG^{-1}\geq \bzero$, note that the off-diagonal entries of $\bG$ are nonpositive; hence, proving $\bG\succ0$ would make $\bG$ an M-matrix so that~$\bG^{-1}\leq\bzero$. The assumption $\diag(\bg_L)+\bB_{LL}\succ0$ stated in this proposition ensures $\bG\succ0$, as we show next. From \eqref{eq:qdiff}, it follows that $\bq_L=\diag(\bv_L)\bi_L$. Substituting $\bi_L=\diag(\bv_L)^{-1}\bq_L$ in the stated condition yields:
\begin{align*}
&\diag(\bi_L)\diag(\bv_L)^{-1}+\bB_{LL}\succ 0\implies\\
&\diag(\bi_L)+\diag(\bv_L)^{1/2}\bB_{LL}\diag(\bv_L)^{1/2}\succ 0\implies\\
&\diag(\bi_L)+\diag(\bv_L)\bB_{LL}=\bG\succ0
\end{align*}
where the first transition follows from Sylvester's law of inertia for congruent matrices, and the second from the similarity transformation involved.
\end{proof}

Proposition~\ref{prop:monotonicPower} identifies the conditions that ensure Assumption~\ref{as:1} holds for the grid model in~\eqref{eq:qdef}. Numerically verifying $\diag(\bg_L)+\bB_{LL}\succ0$ for benchmark systems revealed an interesting observation, discussed next. For the IEEE 5-39-118-bus systems, we scaled up the nominal $\bq_L$ by a scalar until the power flow solver MATPOWER~\cite{MATPOWER} failed to converge. At each step, we also noted the eigenvalues of $\diag(\bg_L)+\bB_{LL}$. Interestingly, the minimum eigenvalue kept decreasing for increasing $\bq_L$, but it remained positive until the last successful power flow instance for all networks. This indicates a relation between $\diag(\bg_L)+\bB_{LL}\succ0$ and the solvability of the AC power flow equations, but its analytical investigation goes beyond the scope of this work.

%Proposition~\ref{prop:monotonicPower} shows that Assumption~\ref{as:1} holds for the grid model in~\eqref{eq:qdef} under the identified conditions. Numerically verifying $\diag(\bg_L)+\bB_{LL}\succ0$ for real-world systems led us to an interesting observation described next. For the IEEE 5-, 39-, and 118-bus systems, we scaled up the nominal $\bq_L$ by a scalar until the power flow solver MATPOWER~\cite{MATPOWER} failed to converge. At every step, we also observed the eigenvalues of $\diag(\bg_L)+\bB_{LL}$. Interestingly, the minimum eigenvalue kept decreasing for increasing $\bq_L$, but remained positive up until the last successful power flow instance for all networks. This indicates a relation between $\diag(\bg_L)+\bB_{LL}\succ0$ and the solvability of the AC power flow equations, but its analytical investigation goes beyond the scope of this work.

\emph{Water Distribution Systems (WDS).}  
A WDS can be modeled by an undirected graph $\mcG=(\mcN,\mcE)$, where the nodes in $\mcN$ correspond to water reservoirs, tanks, junctions and consumers. If $d_n$ denotes the rate of water injection at node $n$, then $d_n\geq0$ for reservoirs; $d_n\leq0$ for water consumers; tanks might be filling or emptying; and $d_n=0$ for junctions. Nodes are connected by edges in $\mcE$, which represent pipelines, pumps, and valves. Let $\sigma_{mn}$ denote the rate of water flowing from node $m$ to $n$ over edge $(m,n)$. It also holds $\sigma_{mn}=-\sigma_{nm}$. Flow conservation at node $n$ dictates:
\begin{equation}\label{eq:WF1}
    d_n=\sum_{m:(n,m)\in\mcE}\sigma_{nm}.
\end{equation}
The relation between water flow $\sigma_{mn}$ across edge $(m,n)$ and the pressures $\pi_m$ and $\pi_n$ at nodes $m$ and $n$ takes the form 
\begin{equation}\label{eq:WF2}
    \pi_m-\pi_n=\rho_{mn}(\sigma_{mn}).
\end{equation}

The operation of a WDS involves serving water demands while maintaining nodal pressures within desirable levels. The task of pressure control can include continuous-valued variables, such as reservoir and tank output pressures, as well as water injections at different nodes. Pressure control can also involve binary variables capturing the on/off status of fixed-speed pumps and valves~\cite{singh2018optimal}. These continuous and binary control variables are often determined by periodically solving ODPs; see e.g.,~\cite{singh2018optimal}. Assuming the binary variables to be fixed between two ODP instances, we focus on the continuous-valued variables. Partition $\mcN$ into the subset $\mcN_S$ of reservoirs and tanks, and the subset $\mcN_L$ of loads. For nodes in $\mcN_S$, pressures are controllable. For nodes in $\mcN_L$, demands are controllable; inelastic demands can be modeled with lower and upper limits coinciding; thus, the control variables now are:
\begin{equation*}
u_n:=\left\{\begin{array}{ll}
\pi_n&,~ n\in\mcN_S\\
d_n &,~ n\in\mcN_L
\end{array}.\right.
\end{equation*}
Given $\bx$, the water injections for the nodes in $\mcN_S$ and the pressures at nodes in $\mcN_L$ are determined by the water flow equations~\eqref{eq:WF1}--\eqref{eq:WF2}. The nodal pressures over $\mcN_L$ need to be maintained at stipulated levels; thus they constitute vector $\by$. For the aforementioned assignments of $\bx$ and $\by$, the mapping $\textrm{F}$ is implicitly defined by \eqref{eq:WF1}--\eqref{eq:WF2}. Reference~\cite{SinghKekatosWF19} guarantees that $\textrm{F}$ maps a given $\bx$ to a unique $\by$.

To verify the validity of Assumption~\ref{as:1}, we use a monotonicity result for dissipative flow networks from~\cite{Vuffray2015monotonicity}. 
To qualify as a dissipative flow network per~\cite{Vuffray2015monotonicity}, the functions $\rho_{mn}(\cdot)$ in \eqref{eq:WF2} should be nondecreasing and continuous. Both these requirements hold true for edges in a WDS. Specifically, for a pipe $(m,n)\in\mcE$, the function $\rho_{mn}$ models the pressure drop due to friction described by the Darcy-Weisbach or the Hazen-Williams laws~\cite{SinghKekatosWF19}. Both are non-decreasing and continuous. For pumps and valves, pressures $\pi_m$ and $\pi_n$ relate to flow $\sigma_{mn}$ via nondecreasing empirical laws as well~\cite{CohenQHmodel}; therefore, WDS fall within the purview of dissipative networks, and the ensuing result applies~\cite{Vuffray2015monotonicity}. 

\begin{lemma}\cite[Corollary~4]{Vuffray2015monotonicity}\label{le:monotonicity}
Let $\bpi$ and $\bpi'$ be $N$-length pressure vectors satisfying the water flow equations~\eqref{eq:WF1}-\eqref{eq:WF2} for demand vectors $\bd$ and $\bd'$. If $\pi_n\geq\pi_n'$ for all $n\in\mcN_S$, and $d_n\geq d_n'$ for all $n\in\mcN_L$, then $\pi_n\geq\pi_n'$ for all $n\in\mcN_L$.
\end{lemma}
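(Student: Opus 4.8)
The plan is to prove the statement via a discrete maximum-principle argument on the nodal pressure difference, executed by contradiction. First I would introduce $\Delta_n := \pi_n - \pi_n'$ for every $n\in\mcN$. The hypotheses supply $\Delta_n\geq 0$ on the source set $\mcN_S$, and the goal is to certify $\Delta_n\geq 0$ on the load set $\mcN_L$ as well. The natural strategy is to examine where $\Delta$ is most negative and show that such a node cannot exist without violating either flow conservation \eqref{eq:WF1} or the boundary data on $\mcN_S$.

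The workhorse is an edge-wise monotonicity identity. For any ordered pair $(n,m)$ with $(n,m)\in\mcE$, subtracting the two instances of \eqref{eq:WF2} gives $\rho_{nm}(\sigma_{nm}) - \rho_{nm}(\sigma_{nm}') = \Delta_n - \Delta_m$. Since $\rho_{nm}$ is nondecreasing, the sign of the left-hand side matches that of $\sigma_{nm}-\sigma_{nm}'$; in particular, $\Delta_n\leq\Delta_m$ forces $\sigma_{nm}\leq\sigma_{nm}'$. This converts the pressure ordering into a flow ordering on each incident edge, which is exactly what is needed to probe the demand balance.

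Next I would assume toward a contradiction that $m^\ast := \min_{n\in\mcN}\Delta_n < 0$, and let $\mcS^- := \{\,n : \Delta_n = m^\ast\,\}$ be the (nonempty, since $N$ is finite) argmin set. Because $\Delta_n\geq 0$ on $\mcN_S$ while $m^\ast<0$, we have $\mcS^-\subseteq\mcN_L$. Fix any $n\in\mcS^-$. For each neighbor $m$, minimality gives $\Delta_n\leq\Delta_m$, so the edge identity yields $\sigma_{nm}\leq\sigma_{nm}'$; hence every summand of the flow-balance difference $d_n-d_n' = \sum_{m:(n,m)\in\mcE}(\sigma_{nm}-\sigma_{nm}')$ from \eqref{eq:WF1} is nonpositive, giving $d_n\leq d_n'$. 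But the hypothesis $d_n\geq d_n'$ on $\mcN_L\supseteq\mcS^-$ forces $d_n=d_n'$, and therefore every summand must vanish: $\sigma_{nm}=\sigma_{nm}'$ for all neighbors $m$, which through the edge identity returns $\Delta_m=\Delta_n=m^\ast$. Thus every neighbor of $n$ also lies in $\mcS^-$. Propagating this closure property along the connected graph $\mcG$ forces $\mcS^-=\mcN$, so $\Delta\equiv m^\ast<0$ on the nonempty set $\mcN_S$ as well, contradicting $\Delta\geq 0$ there. Hence $m^\ast\geq 0$, establishing the claim.

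The step I expect to be the main obstacle is the passage from $\rho_{nm}(\sigma_{nm})\leq\rho_{nm}(\sigma_{nm}')$ to $\sigma_{nm}\leq\sigma_{nm}'$ when $\rho_{nm}$ is merely nondecreasing rather than strictly increasing: a flat segment would permit the flows to differ while the pressures agree, breaking the ``every summand vanishes'' deduction that drives the propagation. I would resolve this by invoking strict monotonicity of the physical pressure-drop laws, since the Darcy--Weisbach and Hazen--Williams friction relations are strictly increasing in the flow. For pumps and valves whose empirical curves may contain flat stretches, I would argue on the reduced flow network obtained by contracting zero-slope edges, or equivalently appeal directly to \cite[Corollary~4]{Vuffray2015monotonicity}, whose hypotheses already accommodate the nondecreasing case and thereby close this gap.
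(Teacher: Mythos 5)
The paper does not prove this lemma at all: it is imported verbatim as \cite[Corollary~4]{Vuffray2015monotonicity}, so your self-contained discrete maximum-principle argument is a genuinely different route --- you are supplying a proof where the paper only supplies a citation. The skeleton is sound: the edge identity $\Delta_n-\Delta_m=\rho_{nm}(\sigma_{nm})-\rho_{nm}(\sigma_{nm}')$ follows correctly from \eqref{eq:WF2}, the argmin set $\mcS^-$ is contained in $\mcN_L$, the flow-balance difference at an argmin node is nonpositive term by term, and the closure/propagation step combined with connectivity of $\mcG$ and nonemptiness of $\mcN_S$ delivers the contradiction. For strictly increasing $\rho_{nm}$ this is a complete and correct proof, and it is arguably more transparent than deferring to the reference.

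The gap you flag yourself, however, is not a corner case here --- it is the case the paper actually needs. The WDS model in Section~\ref{sec:net_modeling} explicitly includes fixed-speed pumps and valves, and the numerical test uses pumps with \emph{constant} pressure gain, i.e.\ $\rho_{mn}$ identically flat. On such an edge, $\rho_{nm}(\sigma_{nm})\leq\rho_{nm}(\sigma_{nm}')$ tells you nothing about the sign of $\sigma_{nm}-\sigma_{nm}'$, so both the ``every summand is nonpositive'' step and the ``every summand vanishes, hence $\Delta_m=m^\ast$'' step break. Of your two proposed repairs, appealing to \cite[Corollary~4]{Vuffray2015monotonicity} is circular when the task is to prove that very corollary; the edge-contraction idea can be made to work for globally flat edges (the internal flow cancels in the merged conservation law, and the merged node inherits source status if either endpoint was a source), but you would still need to handle curves that are flat only on part of their domain, where ``contract the zero-slope edges'' is not well defined because flatness depends on where the two operating points land. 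So as written the proof is complete only under strict monotonicity of every $\rho_{mn}$; to cover the lemma as stated (merely nondecreasing $\rho_{mn}$) you must either execute the contraction argument carefully or adopt the variational machinery of the cited reference.
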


In terms of the assigned vectors $\bx$ and $\by$, Lemma~\ref{le:monotonicity} states that controls $\bx\geq\bx'$ result in $\by\geq\by'$; thus, considering infinitesimal changes at node pairs $m$ and $n$, Lemma~\ref{le:monotonicity} translates to $\frac{\partial y_m}{\partial u_n}\geq 0$, hence satisfying Assumption~\ref{as:1}.

\section{Problem Formulation}
\label{sec:prob_form}
Operators manage the networked system by computing periodically the control set points for agents to implement. Such set points are typically the solution of an ODP
\begin{subequations}
\begin{align}
\bx^* \in \arg\min_{\bx}~&~c(\bx,\by)\tag{P1}\label{eq:ODP}\\
\mathrm{s.to}~& ~\by=\textrm{F}(\bx)\label{seq:NF}\\
%\bg(\bx,\by)=\bzero \label{seq:NF}\\
~&~ \mathbf{h}(\bx,\by)\leq0, \label{seq:NFlimits}\\
~&~ \underline{\bx}\leq\bx\leq\bar{\bx}~\label{seq:xlim}\\
~&~ \by \geq \underline{\by} ~\label{seq:ylim}
\end{align}
\end{subequations}
where $c(\bx,\by)$ is a cost function depending on the system inputs and outputs; the mapping in~\eqref{seq:NF} corresponds to the physical laws governing the networked system; and the inequality constraints are grouped into two categories:
\begin{itemize}
\item Constraint~\eqref{seq:NFlimits} captures requirements that are important to efficiently operate the system, but can in principle be safely violated (especially for a short time). These will be referred to as \emph{soft constraints}.
\item Constraints~\eqref{seq:xlim} and \eqref{seq:ylim} impose limitations on input and output variables. If violated, they can lead to system failure. These constraints will be referred to as \emph{hard constraints}.    
\end{itemize}

Typically, given the set of parameters defining~$\textrm{F}$, a central dispatcher solves~\eqref{eq:ODP} and communicates optimal control set points $\bx^*$ to agents. Ideally, this process shall be repeated every time there is a change in the system model that modifies the underlying definition of $\textrm{F}$, such as an abrupt load change or line tripping for the case of power systems. Constrained by communication and computational resources, however, problem \eqref{eq:ODP} is solved only at finite time intervals. As a consequence, the set points $\bx^*$ can become obsolete or even result in network constraint violations.

Such limitations motivate the design of mechanisms to at least ensure that some important operational requirements are met between two consecutive centralized dispatch actions, i.e., to make the control vector $\bx$ belong to the feasible set: 
$$\mcF = \{ \bx: ~ \underline{\bx}\leq\bx\leq\bar{\bx},~\by=\textrm{F}'(\bx), \by\geq\underline{\by}\}$$ 
in which $\textrm{F}'$ is the  input-output mapping defined by the system model after a disruptive event. Set $\mcF$ considers only the hard constraints of~\eqref{eq:ODP}, namely, \eqref{seq:xlim} and \eqref{seq:ylim}. 
We next put forth an algorithm for steering $\bx$ to $\mcF$.
In the following, we will assume that the set $\mcF$ is non-empty, or $\mcF \neq \emptyset$.
%\textcolor{blue}{
%\begin{remark}
%Upper limits on outputs of the type $\by \leq \overline \by$ can be incorporated in~\eqref{seq:NFlimits}. However, the distributed algorithm devised in this work, which promotes cooperation among agents to meet~\eqref{seq:xlim} and~\eqref{seq:ylim}, does not aim at meeting such constraints. The main reason is that local control rules, where agents do not cooperate, are often enough to steer the system to a configuration meeting $\by \leq \overline \by$; e.g., for a power system case, see~\cite{Energies19}, in which overvoltages were resolved by means of local integral controllers. 
%\end{remark}
%}

\begin{remark}
For power systems, constraint~\eqref{seq:NFlimits} models line flow limits, which can be considered soft constraints because their violation could be briefly tolerated. Limits on the power or voltage output of generators are captured by~\eqref{seq:xlim}. Inequality~\eqref{seq:ylim} models load undervoltage constraints. Avoiding dangerously low voltages is important to prevent a voltage collapse. For WDS, the inequalities~\eqref{seq:NFlimits} can represent the water flow limits on pipes and pumps. Although flow limits shall normally be explicitly enforced, in practice low-pressure limits are triggered before the flow limits~\cite{bragalli2015mathprog}; thus, edge flow can be characterized as soft over short time intervals. Constraints~\eqref{seq:xlim} represent the limits on pressure at water sources and demands by consumers. These limits cannot be violated physically because they are posed by the actual capacity of the WDS components and available demands. The inequalities in~\eqref{seq:ylim} model the minimum pressure requirements at consumer nodes. The latter requirements must be adhered to at all times as low pressures can cause service failures and equipment malfunction at the consumer end. 
\end{remark}

\section{Ripple-Type Network Control}\label{sec:solution}

Our algorithm has the ensuing key attributes:
\renewcommand{\labelenumi}{\emph{A\arabic{enumi})}}
\begin{enumerate}
\item Agent $n\in \mcY$ measures $y_n$ and controls $u_n$ locally. Upon a violation of~\eqref{seq:ylim}, local control resources are used first.
\item Agent $n$ transmits communication signals to a few peer agents over a communication network only if $u_n=\bar{u}_n$, i.e., only when local control resources reach their maximum limit and assistance is sought from neighboring nodes on a communication graph.
\item The control scheme is agnostic to the physical  system parameters, i.e., it is a model-free approach that does not require explicit knowledge of operator $\textrm{F}'$.
\end{enumerate}

The communication network is modeled as an undirected graph $\mcG_c = (\mcN, \mcE_c)$, in which the communication links $\mcE_c$ do not necessarily coincide with the physical connections among agents. Graph $\mcG_c$ is henceforth assumed to be connected, and we denote its adjacency matrix as $\bA$, where $A_{mn} = 1$ if there is a direct communication link between nodes $m$ and $n$, i.e., $(m,n) \in \mcE_c$; and $A_{mn} = 0$ otherwise. Let us also introduce function $\bef:\mathbb{R}^N \rightarrow \mathbb{R}^N_+$ defined entry-wise as:
\begin{equation}\label{eq:fdef}
f_n(\bx):=\left\{\begin{array}{ll}
\underline{y}_n-y_n(\bx),&\underline{y}_n\geq y_n(\bx), n \in \mcY \\
0,&\text{otherwise}.
\end{array}\right.
\end{equation}

Assume that, at time $t=0$, a violation of~\eqref{seq:ylim} occurs as a consequence of a model change. 
%Let $\bx(0)$ and $\by(0)$ be the initial control and observed variables,
Let $\bx(0)$ be the initial control variables, and introduce an auxiliary vector $\blambda\in\mathbb{R}^N$, which is initialized as $\blambda(0)=\bzero$. At subsequent times $t\geq1$, the control scheme proceeds in four steps, as delineated next.

\emph{Step 1:} Agents compute $\bef(\bx(t))$ according to \eqref{eq:fdef}. The entries of $\bef(\bx(t))$ are strictly positive if the associated nodes in $\mcY$ experience a violation of~\eqref{seq:ylim}; and zero, otherwise.

\emph{Step 2:} A target set point is computed as:
\begin{equation}\label{eq:xhat}
    \hat{\bx}(t+1)=\bx(t)+\diag(\bfeta_1)\bef(\bx(t)) + \diag(\bfeta_2) \bA \blambda(t)
\end{equation}
for positive $\bfeta_1$ and $\bfeta_2$. Note that for node $n$, the target $\hat{u}_n(t)$ is computed using the local reading $y_n(t)$ and the entries of $\blambda$ sent from its peers (neighbor nodes of node $n$ on $\mcG_c$).

\emph{Step 3:} Agents compute the auxiliary vector $\blambda$ as
\begin{equation}\label{eq:lambda-project}
    \blambda(t+1)=\max\{\bzero,\diag(\bfeta_3)(\hat{\bx}(t+1)-\bar{\bx})\}.
\end{equation}
for a positive $\bfeta_3$. Vector $\blambda$ serves as a beacon for assistance that is communicated across peer nodes.

\emph{Step 4:} The target set point is projected to the feasible range: %of $\bx$ as 
\begin{equation}\label{eq:x-project}
    \bx(t+1)=\min\{\hat{\bx}(t+1),\bar{\bx}\},
\end{equation}
and is physically implemented.

%\begin{algorithm}[t]
%	\caption{Decentralized approach for solving \eqref{eq:P1}}\label{algo:1}
%	\SetAlgoLined
%	\SetKwInOut{Input}{Input}
%	\SetKwInOut{Output}{Output}
%	\SetKwInOut{Return}{Return}
%	\SetKwInOut{Initialize}{Initialize}
%	\Input{initial $\bx(0)$ and $\by(0)$; limits $\underline{\bx},~\bar{\bx}$, and $\underline{\by}$; positive constants $\bfeta_1,~\bfeta_2$, and $\bfeta_3$; and matrix $\bL$}
%	\Initialize{Set $\blambda(0)\leftarrow\bzero$, %$\bu_{-1}=\boldsymbol{\infty}$, 
%	and $t\leftarrow0$}
%		\emph{S1:} Compute $\bef(\by(t))$.\\
%		\emph{S2:} $\hat{\bx}(t+1)=\bx(t)+\diag(\bfeta_1)\bef(\by(t))-\diag(\bfeta_2)\bL\blambda(t)$\\
%		\emph{S3:} Update control $\bx(t+1)=\max\{\hat{\bx}(t+1),~\bar{\bx}\}$\\
%		\emph{S4:} Compute $\blambda(t+1)=\max\{\bzero,~\diag(\bfeta_3)(\hat{\bx}(t+1)-\bar{\bx})\}$\\
%		\emph{S5:} Measure and update $\by(t+1)$\\
%		$t\leftarrow t+1$
%	%\eIf{$\psi(0)(\lambda)>\psi(0)$}{
			%Set $\underline{\lambda} \leftarrow \lambda,~\lambda\leftarrow\frac{\underline{\lambda}+\overline{\lambda}}{2}$
		%}{
			%Set $\overline{\lambda} \leftarrow \lambda,~\lambda\leftarrow\frac{\underline{\lambda}+\overline{\lambda}}{2}$
		%}
%\Return{flows $\bphi_\mcC=\bphi_\mcC'+\lambda\bn_\mcC$ and pressures}
%\end{algorithm}

To establish the effectiveness, Proposition~\ref{prop:converge} proves that the proposed scheme reaches an equilibrium point, and Proposition~\ref{prop:equilibrium} proves that this equilibrium belongs to $\mcF$. The proofs for these results are provided in the Appendix.

\begin{proposition}\label{prop:converge}
Given any $\bx(0)$, the sequence $\{\bx(t)\}$ converges asymptotically.
\end{proposition}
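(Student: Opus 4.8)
The plan is to show that the iterates $\{\bx(t)\}$ form a coordinate-wise monotonically non-decreasing sequence that is bounded above, so that convergence follows by applying the monotone convergence theorem to each of the $N$ scalar sequences $\{x_n(t)\}$ separately.

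First I would establish the sign structure of the update in~\eqref{eq:xhat}. By construction $\bef(\bx(t))\geq\bzero$ from~\eqref{eq:fdef}, the adjacency matrix obeys $\bA\geq\bzero$, and the projection~\eqref{eq:lambda-project} guarantees $\blambda(t)\geq\bzero$ for all $t$ (with $\blambda(0)=\bzero$). Since $\bfeta_1,\bfeta_2>\bzero$, both correction terms in~\eqref{eq:xhat} are non-negative, whence
\begin{equation*}
\hat{\bx}(t+1)=\bx(t)+\diag(\bfeta_1)\bef(\bx(t))+\diag(\bfeta_2)\bA\blambda(t)\geq\bx(t)
\end{equation*}
for every $t$.

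Next I would exploit the saturation~\eqref{eq:x-project} to obtain both boundedness and monotonicity. The cap $\bx(t+1)=\min\{\hat{\bx}(t+1),\bar{\bx}\}$ immediately yields $\bx(t)\leq\bar{\bx}$ for all $t\geq1$, so each scalar sequence is bounded above by $\bar{x}_n$. For monotonicity, fix $t\geq1$ so that $\bx(t)\leq\bar{\bx}$; since $\hat{\bx}(t+1)\geq\bx(t)$ and the elementwise minimum is order-preserving,
\begin{equation*}
\bx(t+1)=\min\{\hat{\bx}(t+1),\bar{\bx}\}\geq\min\{\bx(t),\bar{\bx}\}=\bx(t),
\end{equation*}
where the final equality uses $\bx(t)\leq\bar{\bx}$. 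Hence $\{\bx(t)\}_{t\geq1}$ is non-decreasing and bounded above by $\bar{\bx}$, so each coordinate converges; as convergence is a tail property, the possibly anomalous first step from an infeasible $\bx(0)\not\leq\bar{\bx}$ is immaterial.

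The argument is short precisely because the beacon feedback $\diag(\bfeta_2)\bA\blambda(t)$ and the local correction $\diag(\bfeta_1)\bef(\bx(t))$ only ever push the target upward, while the saturation caps $\bx$ at $\bar{\bx}$. The point demanding care --- and what might superficially look like the main obstacle --- is that the auxiliary beacon $\blambda(t)$ need not itself converge and could keep growing on saturated coordinates, so that the feedback through $\bA\blambda$ might suggest runaway behavior of $\hat{\bx}(t+1)$. This does not affect the present claim, however, since the statement concerns only $\bx(t)$, whose boundedness is enforced directly by~\eqref{eq:x-project} irrespective of the magnitude of $\blambda$. I would therefore not attempt to control $\blambda$ here, and would simply invoke the monotone convergence theorem coordinate-wise to conclude that $\{\bx(t)\}$ converges asymptotically.
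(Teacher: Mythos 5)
Your proposal is correct and follows essentially the same route as the paper: both arguments show that the projection \eqref{eq:x-project} bounds $\bx(t)$ above by $\bar{\bx}$ and that the nonnegativity of $\bef(\bx(t))$, $\blambda(t)$, $\bA$, $\bfeta_1$, and $\bfeta_2$ makes the iterates coordinate-wise non-decreasing, after which monotone convergence applies. The only cosmetic difference is that you use the order-preserving property of the elementwise minimum where the paper splits into the cases $u_n(t+1)=\bar{u}_n$ and $u_n(t+1)<\bar{u}_n$, and you handle the possibly infeasible initial point $\bx(0)$ slightly more carefully.
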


\begin{proposition}\label{prop:equilibrium}
Let Assumptions~\ref{as:1} hold, let $\mcF \neq \emptyset$, and:
\begin{equation}
    \|\diag(\bfeta_2)\diag(\bfeta_3)\bA\|<1.
    \label{eq:cond_conv}
\end{equation}
A pair $(\bx,~\blambda)$ is an equilibrium for the proposed scheme if and only if $\bx$ belongs to $\mcF$ and $\blambda=\bzero$. 
\end{proposition}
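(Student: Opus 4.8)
The plan is to characterize every equilibrium $(\bx,\blambda)$ as a simultaneous fixed point of the updates \eqref{eq:xhat}, \eqref{eq:lambda-project}, \eqref{eq:x-project} and to prove the two implications separately. The sufficiency direction is a short verification, so I would dispatch it first: if $\bx\in\mcF$ and $\blambda=\bzero$, then $\by=\textrm{F}'(\bx)\geq\underline{\by}$ makes $\bef(\bx)=\bzero$ by \eqref{eq:fdef}, so \eqref{eq:xhat} collapses to $\hat{\bx}=\bx$; since $\bx\leq\bar{\bx}$, the projection \eqref{eq:x-project} returns $\bx$ and \eqref{eq:lambda-project} returns $\bzero$, so $(\bx,\bzero)$ is indeed a fixed point. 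The necessity direction carries all the content.

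For necessity I would first record the structural consequences of the three updates at a fixed point: \eqref{eq:x-project} gives $\bx\leq\bar{\bx}$; \eqref{eq:lambda-project} gives $\blambda\geq\bzero$; and because $\bef(\bx)\geq\bzero$, $\bA\geq\bzero$, $\blambda\geq\bzero$ and $\bfeta_1,\bfeta_2>\bzero$, equation \eqref{eq:xhat} gives $\hat{\bx}\geq\bx$. Introducing the saturated set $S:=\{n:\hat{x}_n>\bar{x}_n\}$, on $S$ the projection pins $x_n=\bar{x}_n$ and \eqref{eq:lambda-project} yields $\lambda_n>0$, whereas off $S$ one has $x_n=\hat{x}_n$, so $\hat{x}_n-x_n=0$; since the two nonnegative increments in \eqref{eq:xhat} then sum to zero, both vanish, giving $f_n(\bx)=0$ and $(\bA\blambda)_n=0$ for $n\notin S$.

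The decisive step is a connectivity dichotomy. For $n\notin S$ the identity $(\bA\blambda)_n=\sum_m A_{nm}\lambda_m=0$ with nonnegative summands forces $A_{nm}=0$ whenever $\lambda_m>0$, i.e. whenever $m\in S$; hence no edge of $\mcG_c$ joins $S$ to its complement, and since $\mcG_c$ is connected either $S=\emptyset$ or $S=\mcN$. Eliminating the branch $S=\mcN$ is where I expect the main obstacle, and it is exactly where Assumption~\ref{as:1}, $\mcF\neq\emptyset$, and \eqref{eq:cond_conv} all enter. In that branch $\bx=\bar{\bx}$; choosing any $\bx^*\in\mcF$ (so $\bx^*\leq\bar{\bx}$), monotonicity gives $\by=\textrm{F}'(\bar{\bx})\geq\textrm{F}'(\bx^*)\geq\underline{\by}$, hence $\bef(\bx)=\bzero$. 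Substituting this back collapses the $\blambda$ relation to $\blambda=\diag(\bfeta_2)\diag(\bfeta_3)\bA\,\blambda$, and condition \eqref{eq:cond_conv} makes $\bI-\diag(\bfeta_2)\diag(\bfeta_3)\bA$ nonsingular, forcing $\blambda=\bzero$ and contradicting $\lambda_n>0$ on $S=\mcN$. Thus $S=\emptyset$, so $\blambda=\bzero$, $\bef(\bx)=\bzero$ (whence $\by\geq\underline{\by}$), and $\bx\leq\bar{\bx}$ was already established.

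It remains to recover the lower box constraint $\bx\geq\underline{\bx}$ needed for $\bx\in\mcF$. This does not follow from the fixed-point relations alone, so I would argue it is inherited from the dynamics: each update adds a nonnegative increment and then projects onto $[\,\cdot\,,\bar{\bx}]$ from above, so the iterates are coordinatewise nondecreasing and any equilibrium reached from an admissible $\bx(0)\geq\underline{\bx}$ satisfies $\bx\geq\underline{\bx}$. Combining $\underline{\bx}\leq\bx\leq\bar{\bx}$ with $\by\geq\underline{\by}$ yields $\bx\in\mcF$, completing necessity. The only genuinely delicate points are the connectivity dichotomy and the elimination of the all-saturated state; everything else is the routine bookkeeping sketched above.
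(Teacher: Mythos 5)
Your proof is correct and rests on the same three ingredients as the paper's (connectivity of $\mcG_c$, monotonicity of $\textrm{F}'$ at $\bar{\bx}$ together with $\mcF\neq\emptyset$, and the contraction condition \eqref{eq:cond_conv}), but the decomposition is genuinely different. The paper proves necessity by contraposition in two separate cases: if $\bx\notin\mcF$, a violated node has $f_n(\bx)>0$, hence $\lambda_n>0$, and saturation is propagated neighbor-by-neighbor along $\mcG_c$ until $\bx=\bar{\bx}$ and $\blambda>\bzero$, contradicting $\by(\bar{\bx})\geq\underline{\by}$; if instead $\bx\in\mcF$ but $\blambda\neq\bzero$, the element-wise bound $\blambda\leq\diag(\bfeta_2)\diag(\bfeta_3)\bA\blambda$ plus \eqref{eq:cond_conv} gives $\|\blambda\|<\|\blambda\|$. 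You instead work directly with the saturated set $S=\{n:\hat{x}_n>\bar{x}_n\}$, use $(\bA\blambda)_n=0$ off $S$ to show no edge crosses the cut, and conclude $S=\emptyset$ or $S=\mcN$ by connectivity; your elimination of $S=\mcN$ via the exact identity $\blambda=\diag(\bfeta_2)\diag(\bfeta_3)\bA\blambda$ and nonsingularity of $\bI-\diag(\bfeta_2)\diag(\bfeta_3)\bA$ is a slightly cleaner deployment of \eqref{eq:cond_conv} than the paper's norm chain, and it handles both of the paper's cases in one pass. The one point where you go beyond the paper is the lower bound $\underline{\bx}\leq\bx$: you correctly note that it cannot be extracted from the fixed-point relations alone, since \eqref{eq:x-project} only clips from above, and must be inherited from the monotone nondecreasing iterates started at an admissible $\bx(0)$; the paper silently skips this, so your explicit treatment is an improvement rather than a defect.
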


The novel control scheme satisfies attribute \emph{A1)} by design. Moreover, for a node $n$ with target set points $\hat{u}_n(t+1)$ within the local control limit $\bar{u}_n$, the corresponding entry of $\blambda(t+1)$ is zero; thus, the computation of $\hat{u}_m$ from \eqref{eq:xhat} for nodes $m$ that are neighbors of $n$ requires no communication from node $n$, hence fulfilling \emph{A2)}. The scheme also meets \emph{A3)} because it is agnostic to the physical network topology and/or demands.

\begin{remark}
Apparently, parameter $\bfeta_1$ does not influence the control scheme convergence. Indeed,~\eqref{eq:cond_conv} provides a condition only on $\bfeta_2$ and $\bfeta_3$; however, $\bfeta_1$ affects the equilibrium point and the convergence rate, whose analytical quantification is beyond the scope of this work. Optimizing upon the parameters $\bfeta_1-\bfeta_3$, and communication graph $\mcG_c$ constitutes pertinent future research directions.
\end{remark}

\section{Numerical Tests}\label{sec:tests}
The control algorithm was tested for two utility network applications: voltage control in power systems and pressure control in water systems. Parameters $\bfeta_2$ and $\bfeta_3$ were chosen so that~\eqref{eq:cond_conv} holds true. The communication graphs $\mcG_c$ were arbitrarily chosen while ensuring connectivity.

\begin{figure}[t]
    \centering
    \includegraphics[width=0.30\textwidth]{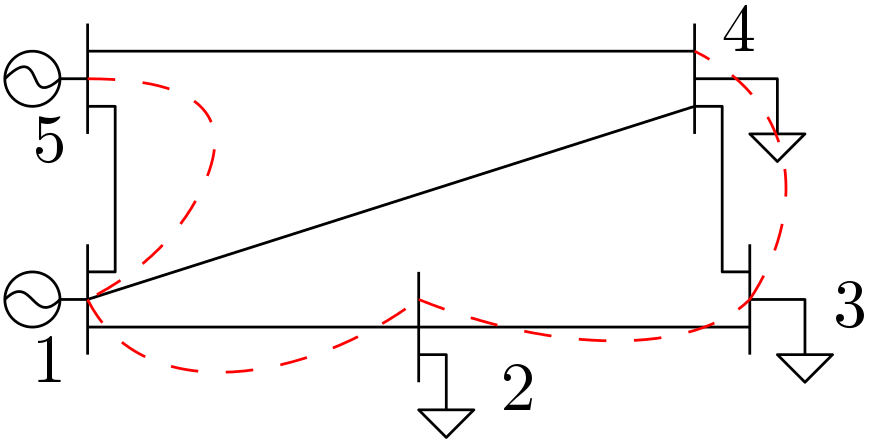}
    \caption{Modified PJM system. Communication links as red dashed lines.}
    \label{fig:testbed}
\end{figure}

\begin{figure}[t]
    \centering
    \includegraphics[scale=0.18]{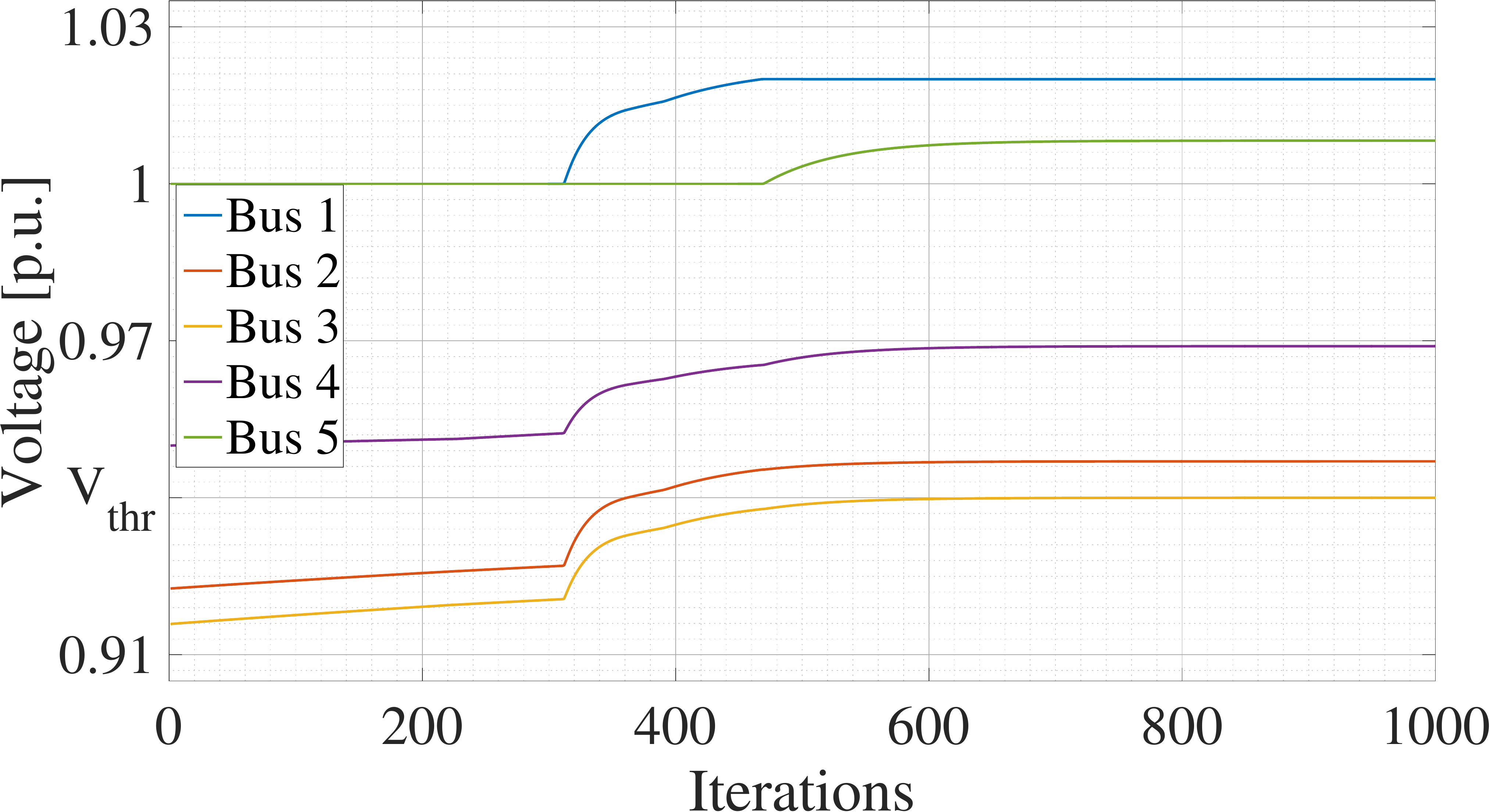}\vspace*{0.4em}
    \includegraphics[scale=0.18]{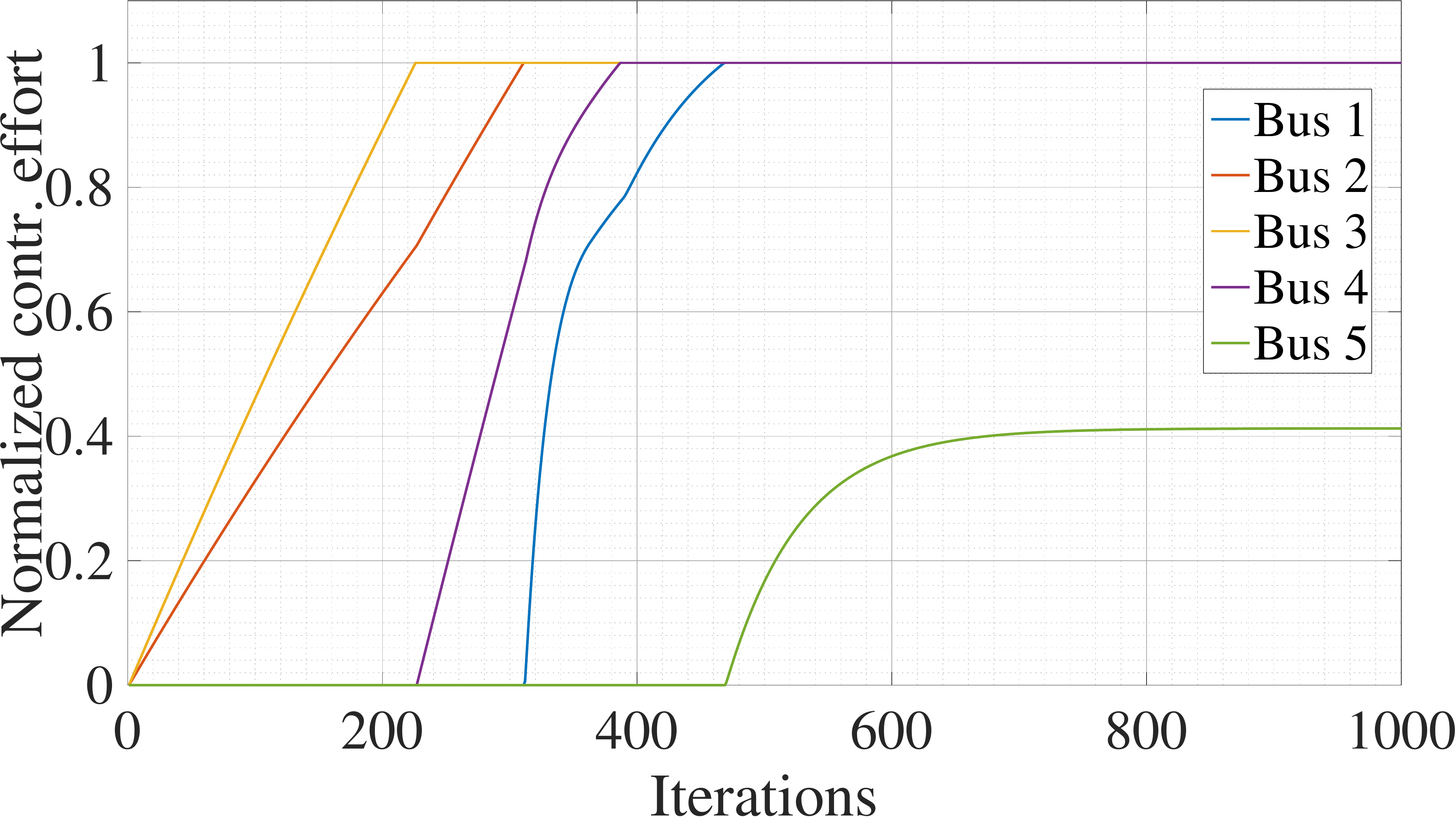}
    \caption{\emph{Top:} Bus voltages. \emph{Bottom:} Normalized control effort $(u_n(t)-u_n(0))/(\bar{u}_n(t)-u_n(0))$ used.}
    \label{fig:PS_results}
\end{figure}

\paragraph{A Power System Test Case}\label{subsec:powertests}
 The benchmark network used for the tests on power systems is a modified version\footnote{In our numerical setup, the generators at Bus 3 and Bus 4 were neglected.} of the PJM 5-bus system~\cite{LiBo} shown in Fig.~\ref{fig:testbed}. Buses 1 and 2 serve as generators, whereas buses 3, 4, and 5 are loads. Generators are initially at $u_n(0)=1$~p.u., allowed to increase their voltage output to 1.02~pu; loads can reduce their power demand by 10~MW, and their voltages are required to be greater than $V_{thr} = 0.94$~p.u., which represents a safe threshold. The top panel of Figure~\ref{fig:PS_results} shows the bus voltage trajectories, whereas the bottom panel plots the ratio of control resources used $(u_n(t)-u_n(0)/(\bar{u}_n(t)-u_n(0))$. After an abnormal event, buses 2 and 3 have voltage less than $V_{thr}$ and start performing the proposed control strategy. Because their local efforts do not manage to regulate the voltages, they seek assistance from their neighbors in the communication network, precisely, Bus 4 (around the 200-th iteration) and Bus 1 (around the 300-th iteration). Finally, when Bus 1 hits its control limits, Bus 5 kicks in (after the 400-th iteration) and is finally able to bring the voltage within the safe interval.

\paragraph{A Water Network Test Case}

\begin{figure}[t]
    \centering
    \includegraphics[width=0.40\textwidth]{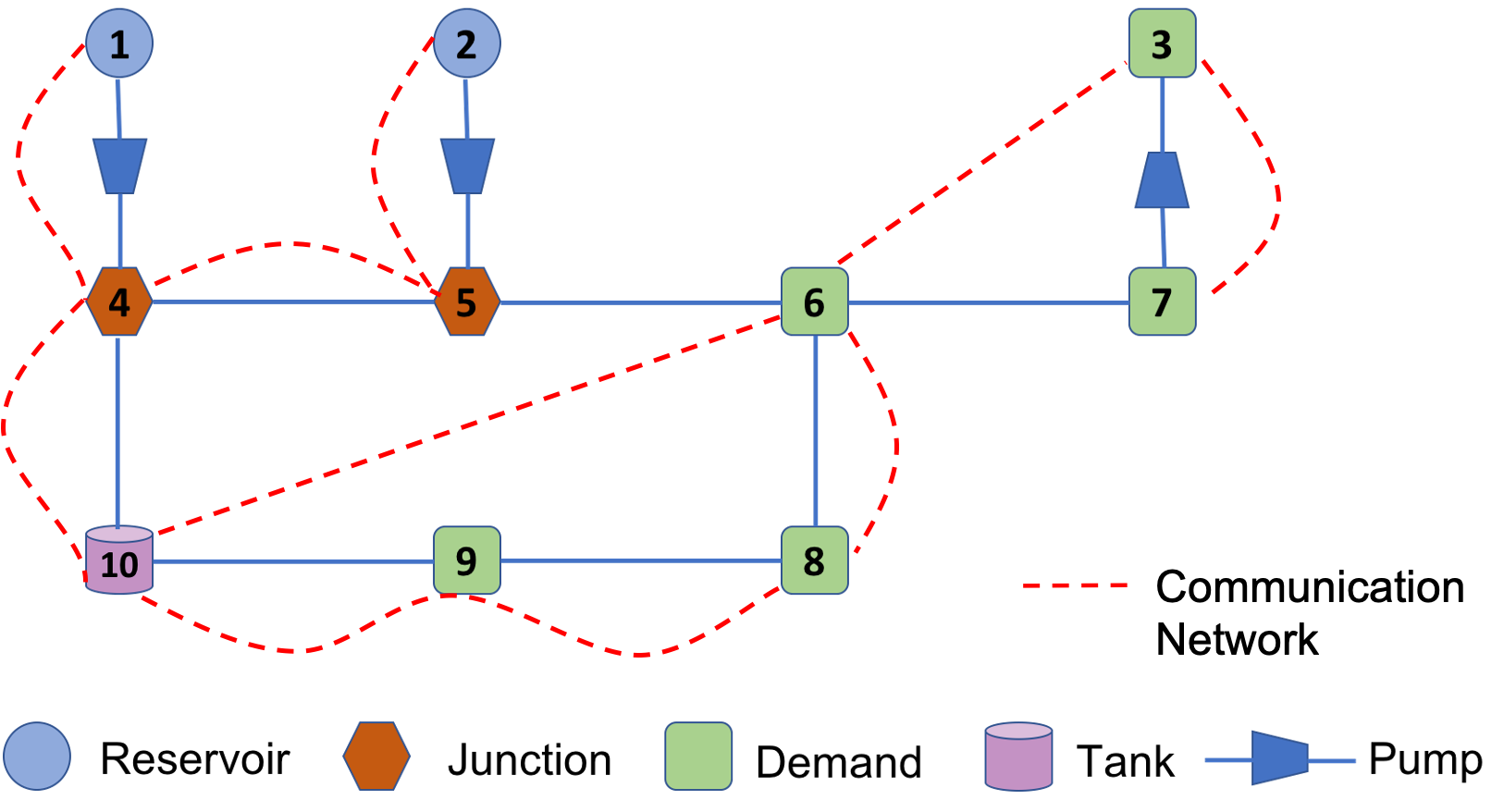}
    \caption{A 10-node benchmark water distribution system.}
    \label{fig:wds}
\end{figure}

\begin{figure}[t]
    \centering
    \includegraphics[scale=0.18]{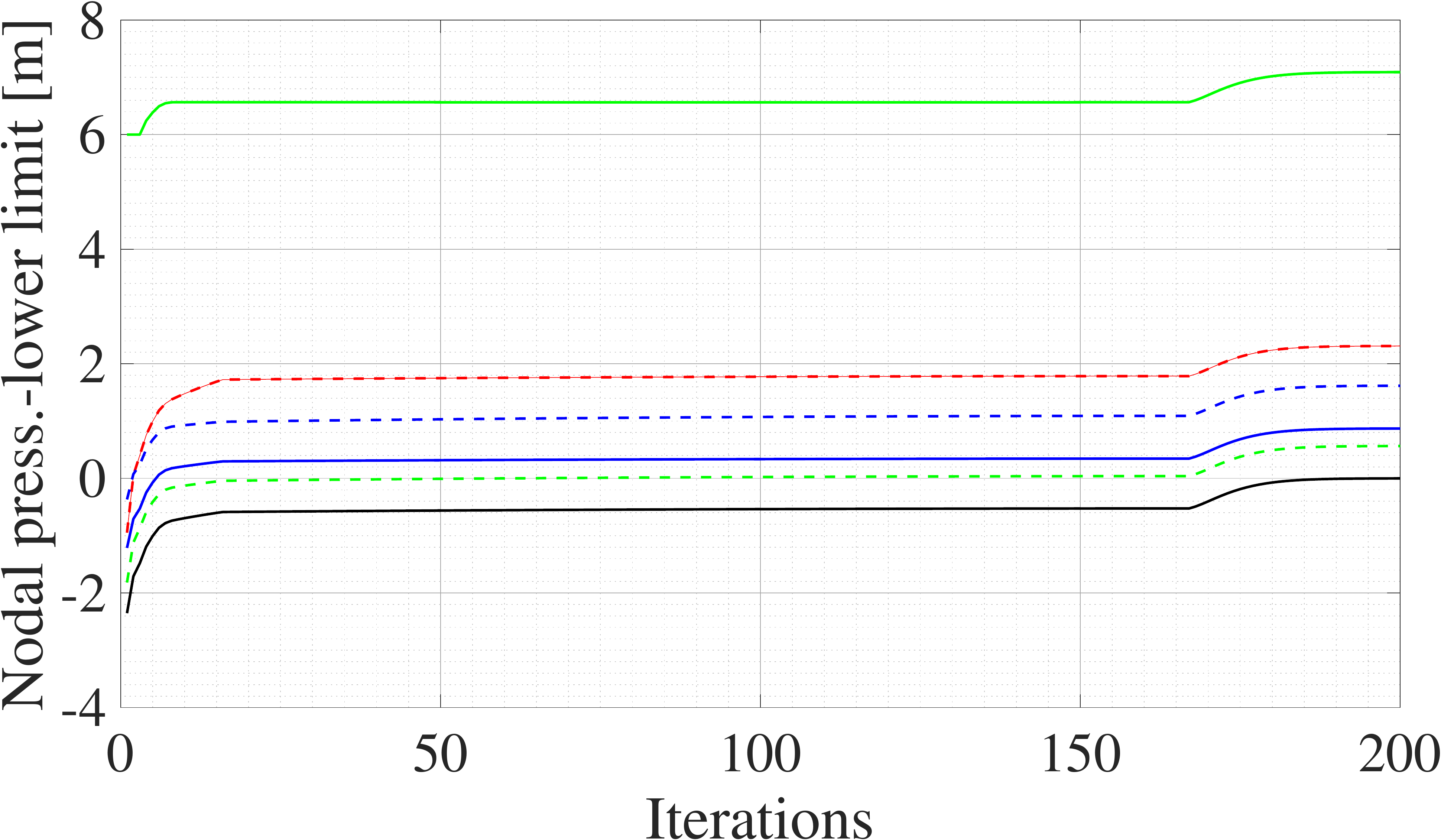}\vspace*{0.4em}
    \includegraphics[scale=0.18]{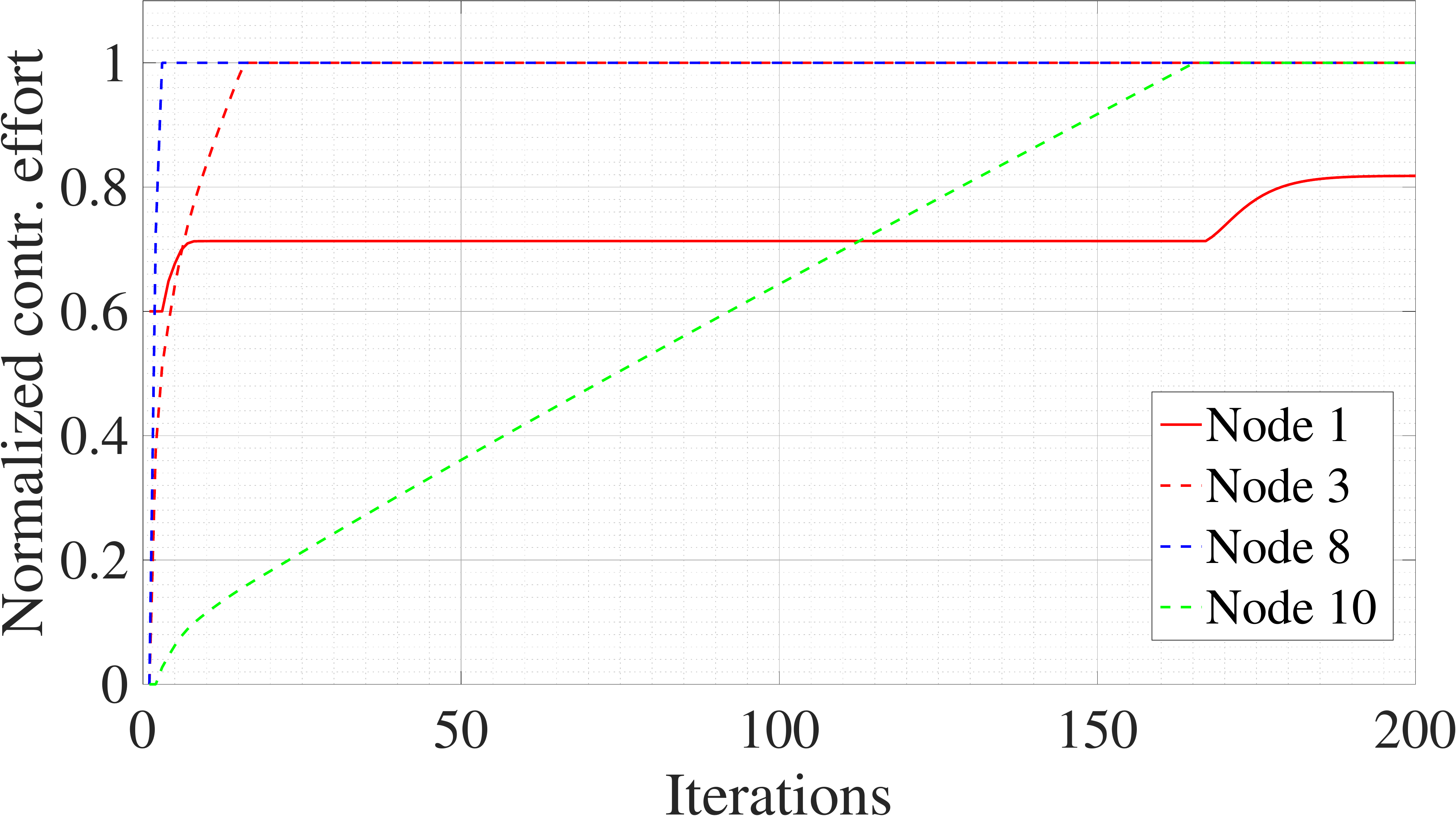}
    \caption{\emph{Top:} Difference of pressures at nodes 3-10 from their lower limits. \emph{Bottom:} Normalized control effort at nodes 1, 3, 8, and 10}
    \label{fig:wdsresults}
\end{figure}

The proposed algorithm was applied the pressure task on the 10-node WDS shown on Fig.~\ref{fig:wds}. % which consists of 2 reservoirs and a tank; 3 fixed-speed pumps; and 7 lossy pipes.
 Pipe dimensions and friction coefficients were taken from~\cite{singh2018optimal}. The minimum pressure requirement $\underline{\pi}_n$ for nodes 3 to 10
is $\{10,7,10,10,5,10,10,10\}$~m. The pumps $(1,2)$, $(2,5)$, and $(7,3)$ are considered to be operating at fixed speeds with constant pressure gains of $10$, $10$, and $5$~m, respectively. All water flow instances were solved using the optimization-based solver of~\cite{SinghKekatosWF19}. The base operating condition involves injection $\bd_0=[380,300,-170,0,0,-220,-200,-150,-140,200]^\top$m$^3/$hr and $\bpi=[3.0,1.8,11.6,13,11.9,10.2,6.6,10.3,10.9,12.9]^\top$m. A disruption was modeled by considering a failure of pump~$(2,5)$, resulting in the unavailability of the Node~2 reservoir and the reservoir at Node~1 supplying $d_1=680$m$^3/$hr. This contingency would result in $\bpi=[3.0,0,9.1,13,8.8,7.7,4.1,8.2,9.6,12.7]^\top$~m, which violates the minimum pressure needed at nodes $3$ and $5-9$.

To study a scenario where not all agents have control capability, only two demand nodes --namely $3$ and $8$-- are allowed to reduce demand by $50$~m$^3/$h. Further, the tank node $10$ has an injection flexibility of $\pm 200$~m$^3/$h and reservoir $1$ has a controllable pressure range of $[0,5]$~m. The performance of the proposed algorithm in restoring the pressures is demonstrated in Fig.~\ref{fig:wdsresults}. The top panel shows the difference $\pi_n-\underline{\pi}_n$ for nodes $n=\{3\dots10\}$. As anticipated, the pressures are nondecreasing and the algorithm succeeds in restoring them above the respective lower limits. The bottom panel plots the normalized nodal control effort. As desired, once all pressures are restored to the desired levels, the algorithm attains an equilibrium, and $\bx$ saturates.

\section{Conclusions}
A ripple-type coordination scheme for the emergency control of networked systems has been put forth. Agents act based on local control rules as long as local resources have not been saturated. Otherwise, they solicit help from peer agents through a ``hotline'' communication network not necessarily coinciding with the physical system graph. The algorithm provably converges to safe operating conditions under an appropriate choice of parameters. Its validity has been illustrated on power and water network examples. Future research directions include enforcing upper limits on the output variables and optimally designing control parameters. 

%%%%%%%%%%%%%%%%%% Appendix %%%%%%%%%%%%%%%%%%%%%
\appendix
%\subsection{Proof of Proposition~\ref{prop:converge}}
%\begin{proof}[Proof of Proposition~\ref{prop:converge}]
\emph{Proof of Proposition~\ref{prop:converge}} Owing to the projection in~\eqref{eq:x-project}, it is evident that $\bu(t)\leq\bar{\bu}$ for all $t$; thus, proving a non-decreasing property $\bu(t)\leq\bu(t+1)$ for all $t$ is sufficient for establishing convergence of the sequence $\{\bu(t)\}$. Consider an arbitrary node $n$ and time $t$. When $u_n(t+1)=\bar{u}_n$ we trivially have $u_n(t) \leq u_n(t+1)$. When $u_n(t+1) < \bar{u}_n$, \emph{Step 4} yields $\hat u_n(t+1) = u_n(t+1)$ and \eqref{eq:xhat} implies:
\begin{equation*}
    u_n(t+1) = u_n(t)+\eta_{1,n}f_n(\bu(t))+\eta_{2,n}\sum_{m=1}^N A_{nm}\lambda_m(t)\geq  u_n(t)
\end{equation*}
because $f_n(\bu(t))$ and $\blambda_m(t)$ are nonnegative; $\eta_{1,n}$ and $\eta_{2,n}$ are positive; and matrix $\bA$ has nonnegative entries.
%\end{proof}

%\begin{proof}[Proof of Proposition~\ref{prop:equilibrium}]
\emph{Proof of Proposition~\ref{prop:equilibrium}}
Assume $\bu \in \mcF$ and $\blambda=\bzero$. It follows that $\by(\bu)\geq\underline{\by}$ and so $\bef(\bu) = \bzero$ from~\eqref{eq:fdef}. Upon initializing the proposed control scheme at $\bu$ and $\blambda$, \emph{Step 1} provides $\bef(\bu) = \bzero$; \emph{Step 2} yields $\hat{\bu}(1)=\bu$; \emph{Step 3} provides $\blambda(1)=\bzero$; and \emph{Step 4} that $\bu(1) = \bu$. Therefore, $(\bu,~\blambda)$ is an equilibrium for the proposed control steps.

To establish the reverse direction, we will prove the contrapositive statement, i.e., if $\bu$ does not belong to $\mcF$ or $\blambda\neq\bzero$, then $(\bu,\blambda)$ is not an equilibrium. We show the two cases separately using proof by contradiction. 

\emph{Case 1)} 
Suppose $\bu\notin\mcF$, yet there exists a $\blambda$ such that $(\bu,\blambda)$ is an equilibrium for the algorithm. Because $\bu$ is projected in its permissible range by~\eqref{eq:x-project}, its infeasibility means there exists a node $n\in\mcY$ such that:
\begin{equation}
   \underline{y}_n>y_n(\bx)
    \label{eq:yn<y}.
\end{equation}

\emph{Step 2} dictates $\hat{u}_n= u_n+\eta_{1,n}f_n(\bx) + \eta_{2,n}\sum_{\ell=1}^N A_{n,\ell}\lambda_\ell$.

Because $f_n(\bx)>0$, it follows that $\hat{u}_n > u_n$; however, since \emph{Step 4} sets $u_n = \min\{\hat{u}_n,\bar u_n\}$, it follows that ${u}_n=\bar u_n$ and $\hat{u}_n>\bar{u}_n$, so  $\lambda_n > 0$ from \eqref{eq:lambda-project}.

Consider a node $m$ neighbor of $n$ in $\mcG_c$ and compute
%
% \begin{equation*}%\label{eq:xm}
$\hat{u}_m = u_m + \eta_{1,m}f_m(\bu) + \eta_{2,m}\sum_{\ell=1}^N A_{m,\ell}\lambda_\ell$.
% \end{equation*}
%
Because $\lambda_n >0$ and $u_m = \min\{\hat{u}_m,\bar u_m\}$, it holds again that $\hat{u}_m > u_m= \bar{u}_m$, so $\lambda_m > 0$. 
Repeating this argument for all neighbors of neighbors of $n$, and so on until all nodes have been covered, one gets $\blambda > \bzero$ and:
\begin{equation}\label{eq:x=xbar}
    \bx = \bar \bx.
\end{equation}
However, because $\mcF$ is assumed to be non-empty, Assumption~\ref{as:1} implies that $\by(\bar{\bx}) \geq \by({\bx'})$ for $\underline{\bx}\leq\bx'\leq\bar{\bx}$; hence:
\begin{equation}\label{eq:max_eff}
    \by(\bar{\bx}) \geq \underline{\by}.
\end{equation}
In other words, applying the maximum control effort makes the constraints on observed outputs hold. 
This concludes the proof for Case 1 because~\eqref{eq:yn<y} and \eqref{eq:x=xbar} contradict~\eqref{eq:max_eff}.

\emph{Case 2)}: Suppose there exists a $\blambda\neq\bzero$ and $\bu \in \mcF$, such that $(\bu,\blambda)$ is an equilibrium for the proposed scheme. By plugging~\eqref{eq:xhat} into~\eqref{eq:lambda-project}, we can express $\blambda(t+1)$ as:
\begin{align*}%\label{eq:lambda_alt}
    \blambda(t+1)=\max \big\{&\bzero,\diag(\bfeta_3) \big(\bu(t)+\diag(\bfeta_1)\bef(\bu(t)) + \notag \\
    &\diag(\bfeta_2)\bA\blambda(t) - \bar{\bu}\big) \big\}.
\end{align*}
The feasibility assumption provides $\bef(\bu)=\bzero$, and the equilibrium condition yields:
\begin{align}\label{eq:lambdatrain}
\blambda&= \max \big\{\bzero,\diag(\bfeta_3) \big(\bu + \diag(\bfeta_2)\bA\blambda - \bar{\bu}\big) \big\} \notag \\
 %   & \leq \max \big\{\bzero, \diag(\bfeta_3) \diag(\bfeta_2)\bA\blambda \big\} \notag \\
    & \leq \diag(\bfeta_3) \diag(\bfeta_2)\bA\blambda
\end{align}
where the %first and second inequalities stem 
inequality stems from the fact that $\bu - \bar{\bu}\leq\bzero$ and $\diag(\bfeta_2)\diag(\bfeta_3)\bA\blambda(t)\geq\bzero$.
Invoking the norm inequality on~\eqref{eq:lambdatrain}, we get ${\|\blambda\|\leq\|\diag(\bfeta_2)\diag(\bfeta_3)\bA\|\|\blambda\|}$. Also, using the condition $\|\diag(\bfeta_2)\diag(\bfeta_3)\bA\|<1$ with $\blambda\neq\bzero$ yields ${\|\blambda\|<\|\blambda\|}$, which is a contradiction.

\bibliographystyle{IEEEtran}
\bibliography{myabrv,power}

\end{document}